\newcommand{\code}[1]{\ensuremath{\mathscr{#1}}}
\newcommand{\wt}[1]{\ensuremath{\textsf{wt}(#1)}}
\newcommand{\prob}[1]{\ensuremath{\textsf{Pr}\left\{#1\right\}}}
\newcommand{\F}{\mathbb{F}}
\newtheorem{proposition}{Proposition}
\newtheorem{lemma}{Lemma}
\newtheorem{remark}{Remark}
\begin{document}


 \title[Cryptanalysis of McEliece Cryptosystems Based on Quasi-Cyclic Codes]{Cryptanalysis of  Two McEliece Cryptosystems Based on Quasi-Cyclic Codes}
 \author{Ayoub Otmani}
\address{GREYC - Ensicaen - Universit\'e de Caen,
 Campus II, Boulevard Mar\'echal Juin, F-14050 Caen Cedex, France.}
\email{Ayoub.Otmani@info.unicaen.fr}
\author{Jean-Pierre Tillich}
\address{INRIA, Projet Secret, BP 105, Domaine de Voluceau 
 F-78153 Le Chesnay, France.}
\email{jean-pierre.tillich@inria.fr}
\author{L\'eonard Dallot}
\address{GREYC - Ensicaen - Universit\'e de Caen,
Campus II, Boulevard Mar\'echal Juin, F-14050 Caen Cedex, France.}
\email{Leonard.Dallot@info.unicaen.fr}

\begin{abstract}
  We cryptanalyse here two variants of the McEliece cryptosystem based
  on quasi-cyclic codes. Both aim at reducing the key size by
  restricting the public and secret generator matrices to be in
  quasi-cyclic form.  The first variant considers subcodes of a
  primitive BCH code.  The aforementioned constraint on the public and
  secret keys implies to choose very structured permutations.  We
  prove that this variant is not secure by producing many linear
  equations that the entries of the secret permutation matrix have to
  satisfy by using the fact that the secret code is a subcode of a
  known BCH code. This attack has been implemented and in all
  experiments we have performed the solution space of the linear
  system was of dimension one and revealed the permutation matrix.

  The other variant uses quasi-cyclic low density parity-check codes.
  This scheme was devised to be immune against general attacks working
  for McEliece type cryptosystems based on low density parity-check
  codes by choosing in the McEliece scheme more general one-to-one
  mappings than permutation matrices.  We suggest here a structural
  attack exploiting the quasi-cyclic structure of the code and a
  certain weakness in the choice of the linear transformations that
  hide the generator matrix of the code. This cryptanalysis adopts a
  polynomial-oriented approach and basically consists in searching for
  two polynomials of low weight such that their product is a public
  polynomial.  Our analysis shows that with high probability a
  parity-check matrix of a punctured version of the secret code can be
  recovered with time complexity $O\left( n^3 \right)$ where $n$ is
  the length of the considered code.  The complete reconstruction of
  the secret parity-check matrix of the quasi-cyclic low density
  parity-check codes requires the search of codewords of low weight
  which can be done with about $2^{37}$ operations for the specific
  parameters proposed.

\textbf{Keywords}. McEliece cryptosystem, quasi-cyclic codes, BCH codes, LDPC codes, cryptanalysis.
\end{abstract}

\maketitle

\section{Introduction}
Since the introduction of the McEliece public-key cryptosystem \cite{McEliece78}, 
several attempts have been made to propose 
alternatives to the classical Goppa codes. The main  motivation 
is to drastically reduce the size of the public and private keys, which is of real concern 
for any concrete deployment. For instance, the parameters suggested in the original cryptosystem, 
and now outdated, are about 500 Kbits for the public key and 300 Kbits for the private key.
The reason of such a large amount comes from the fact that McEliece proposed 
to use as public key a generator matrix of a linear block code.
He suggested to take a code that admits an efficient decoding algorithm capable 
to correct up to a certain number of errors, and then to hide its structure 
by applying two secret linear transformations: a scrambling transformation 
that sends the chosen generator matrix to another one, and a permutation matrix that reorders 
the coordinates. The resulting matrix is then the public key.
The private key consists in 
the two secret transformations and the decoding algorithm. 

Niederreiter also invented
\cite{Niederreiter86} a code-based asymmetric cryptosystem by choosing to describe 
codes through a parity-check matrix. These two systems are equivalent in terms of security 
\cite{LiDenWan94}.
Their security relies on two difficult problems: the One-Wayness against Chosen-Plaintext
Attack (OW-CPA) thanks to the difficulty of decoding large random linear block codes,
and the difficulty
of guessing the decoding algorithm from a hidden 
generator matrix. It is worthwhile
mentioning that the OW-CPA character is well established as long as appropriate
parameters are taken. This is due to two facts: 
first it is proven in \cite{BMT78} 
that decoding a random linear code  is NP-Hard, and second the best known algorithms 
\cite{CanteautChabaud98,BLP08} and \cite[Volume I, Chapter 7]{Handb98} operate exponentially with the length $n$ 
of the underlying code (see \cite{EOS06} for more details). However, the second criteria is not always verified by any class
of codes that has a decoding algorithm. For instance, Sidel'nikov and Shestakov 
proved in \cite{SidelShesta92} that the structure of Generalised Reed-Solomon codes of length $n$
can be recovered in $O\left(n^3\right)$ (See for instance \cite[page 39]{Gabidulin95}). Sendrier proved \cite{Sendrier98} 
that the permutation transformation can be extracted for concatenated codes.
Minder and Shokrollahi presented in \cite{MinderShokrollahi07} a structural
attack that creates a private key against a cryptosystem based 
on Reed-Muller codes \cite{Sidelnikov94}.

However, despite these attacks on these variants of the
McEliece cryptosystem, the original scheme still remains resistant to any structural attack.
Additionally,  the McEliece system and its Niederreiter homologue display
better encryption and decryption complexity than any other competing 
asymmetric schemes like RSA. 
Unfortunately, they suffer from the same drawback namely, they need very large key sizes
as previously pointed out.
It is therefore crucial to find a method to reduce the   
representation of a linear code as well as the matrices of the linear transformations.

A possible solution is to take  very sparse matrices. This idea has been applied in 
\cite{MonRosShok00} which examined the implications of using 
Low Density Parity-Check (LDPC) codes. The authors showed that taking sparse matrices 
for the linear transformations is not a secure solution. Indeed, it is possible to recover
the secret code from the public parity-check matrix. 
Another idea due to \cite{BergerLoidreau05} is to take subcodes of an optimal code such as Generalized Reed-Solomon codes
in order to decrease the code rate. But a great care has to be taken in the choice of parameters because in \cite{Wieschbrink06} it has been proved
that some parameters are not secure.
A recent trend appeared in code-based public key cryptosystems that tries to 
use quasi-cyclic codes \cite{Gab05,BaldiChiara07,GabLaurSend07,GabGir07,CayOtmVer07}. This particular family of codes offers the advantage of having a very 
simple and compact description. 
Many codewords can simply be obtained by considering cyclic shifts of a sole codeword.
Exploiting this fact leads  to much smaller public and private keys.
Currently there exist two public-key cryptosystems 
based upon quasi-cyclic codes. The first proposal \cite{Gab05} uses subcodes
of a primitive BCH cyclic code. The size of the public key
for this cryptosystem is about 20Kbits.
The other one \cite{BaldiChiara07} tries
to combine these two positive aspects by requiring quasi-cyclic LDPC codes. It also avoids trivial attacks
against McEliece type cryptosystems based on LDPC codes by using in the secret key 
a more general kind of invertible matrix
instead of a permutation matrix. For this particular 
system, the authors propose a public key size that is about 48Kbits.

In this work, we cryptanalyse these two cryptosystems.
We show that the cryptosystem of \cite{Gab05} is not secure because it is possible
to recover the secret permutation that is supposed to hide the structure of the secret quasi-cyclic
code.
We prove  it by producing many linear equations that the entries of the secret permutation matrix have to satisfy by using
the fact that the secret code is a subcode of a known BCH code.
This attack has been implemented and in all experiments we have performed the solution space of the linear
system was of dimension one and revealed the permutation matrix.

In a second part, we also suggest 
 a structural attack of \cite{BaldiChiara07} exploiting the quasi-cyclic structure of the code and
a certain weakness in the choice 
of the linear transformations that hide the generator 
matrix of the code. This cryptanalysis adopts 
a polynomial-oriented  approach and basically
consists in searching for two polynomials of low weight such 
that their product is a public polynomial. 
Our analysis shows that with high probability 
a parity-check matrix of a punctured version of the secret code can be recovered with
time complexity $O\left( n^3 \right)$ 
where $n$ is the length of the considered code. An implementation shows that this recovery
can be done in about $140$ seconds on a PC. The final step that consists in completely 
reconstructing the original parity-check matrix of the secret quasi-cyclic low density parity-check
code requires the search for low weight codewords  which can be done with about 
$2^{37}$ operations for the specific parameters proposed. 

The rest of this paper is organised as follows. In Section~\ref{sec:notation}, 
we recall definitions and basic properties of circulant matrices. 
Section~\ref{sec:BCH} gives a description of how to totally break the McEliece variant 
proposed in \cite{Gab05}. In Section~\ref{sec:ldpc} we propose a method to totally 
cryptanalyse the scheme of \cite{BaldiChiara07}. Section~\ref{sec:conclusion} 
concludes the paper.

\section{Notation and Definitions} \label{sec:notation}
\subsection{Circulant Matrices}

Let $\F_2$ be the finite field with two elements and denote by $\F_2[x]$ 
the set of univariate polynomials with coefficients in $\F_2$. Any $p$-bit
vector $\boldsymbol{v}= (v_0,\dots{},v_{p-1})$ is identified to the polynomial 
$\boldsymbol{v}(x) = v_0 + \cdots{} v_{p-1}x^{p-1}$.
The \textit{support} of a vector (or a polynomial) $\boldsymbol{v}$ 
is the set of positions $i$ such that $v_i$ is non-zero and the \textit{weight} $\wt{\boldsymbol{v}}$ of $\boldsymbol{v}$ is
the cardinality of its support. 
The \textit{intersection} polynomial  for any two polynomials 
$\boldsymbol{u}(x)$ and $\boldsymbol{v}(x)$ is 
$\boldsymbol{u}(x) \star \boldsymbol{v}(x) = \sum u_i v_i x^i$. 

A binary \textit{circulant} matrix $M$ is a $p \times p$ matrix 
obtained by cyclically right shifting the first row:
\begin{equation}
M = \left ( \begin{array}{*{4}{l}}
m_0        & m_1   & \cdots{} & m_{p-1}\\
m_{p-1}   & m_0      &   \cdots{} & m_{p-2} \\
\vdots{}   & \vdots{}      &  \ddots{} & \vdots{} \\
m_1    & m_2   &   \cdots{} & m_0
\end{array}
\right).
\end{equation}
Thus any circulant matrix $M$ is completely described by only its first row
$\boldsymbol{m} = (m_0,\dots{},m_{p-1})$.
Note that a circulant matrix is also obtained by cyclically down shifting its first column. 
We shall  see that the classical matrix operations of addition and multiplication preserve  the 
circulant structure of matrices. 
It is possible to characterise 
the $i$-th row  of a circulant matrix $M$ as the polynomial:
$$
x^i \cdot{} \boldsymbol{m}(x) \mod (x^p-1).
$$
If one looks at the product $\boldsymbol{b} \times M$ 
of a circulant matrix $M$ with a binary vector 
$\boldsymbol{b} = (b_0,\dots{},b_{p-1})$ then it exactly corresponds to the $p$-bit vector
represented by the polynomial $\boldsymbol{b}(x)  \cdot{} \boldsymbol{m}(x)  \mod (x^p-1)$.
This property naturally extends to the product of two $p \times p$ circulant matrices 
$M$ and $N$. Indeed, the first row of $M \times N$ is exactly 
$\boldsymbol{m}(x) \cdot{}  \boldsymbol{n}(x) \mod (x^p-1)$
and the $i$-th row of $M \times N$ is represented 
by the polynomial:  
$$
\Big(x^i\cdot{}\boldsymbol{m}(x)\Big) \cdot{}  \boldsymbol{n}(x) \mod (x^p-1) = 
x^i \cdot{} \Big(\boldsymbol{m}(x) \cdot{}\boldsymbol{n}(x)\Big) \mod (x^p-1).
$$
We have therefore the following result.
\begin{proposition}
\label{prop:isomorphism}
Let $\mathfrak{C}_p$ be the set of binary $p\times p$ circulant matrices,
then  there 
exists an isomorphism  between the rings 
$\Big(\mathfrak{C}_p,+,\times\Big)$ and $\Big(\F_2[x]/(x^p-1),+,\cdot{}\Big)$: 
$$
\Big(\mathfrak{C}_p,+,\times \Big) \simeq \Big(\F_2[x]/(x^p-1),+,\cdot{} \Big) 
$$
\end{proposition}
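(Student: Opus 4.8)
The plan is to exhibit the ring isomorphism explicitly by the map that sends a circulant matrix $M$ to the polynomial $\boldsymbol{m}(x)$ determined by its first row, and to check that this map is a bijective ring homomorphism.

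First I would define $\varphi \colon \mathfrak{C}_p \to \F_2[x]/(x^p-1)$ by $\varphi(M) = \boldsymbol{m}(x)$, where $\boldsymbol{m} = (m_0,\dots,m_{p-1})$ is the first row of $M$. I would also note the inverse construction: given a polynomial $\boldsymbol{m}(x)$ of degree at most $p-1$, form the $p \times p$ matrix whose $i$-th row is the coefficient vector of $x^i \boldsymbol{m}(x) \bmod (x^p-1)$; by the characterisation of the rows of a circulant matrix recalled just before the statement, this matrix is circulant and its first row is $\boldsymbol{m}$. This shows $\varphi$ is a bijection, since the two constructions are mutually inverse: a circulant matrix is entirely determined by its first row, and every first row arises from exactly one polynomial.

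Next I would verify that $\varphi$ respects the ring operations. Additivity is immediate: the first row of $M+N$ is $\boldsymbol{m}+\boldsymbol{n}$, so $\varphi(M+N) = \boldsymbol{m}(x) + \boldsymbol{n}(x)$, and the sum of two circulant matrices is circulant because adding the rowwise relation $x^i\boldsymbol{m}(x)$ coordinatewise preserves the shift structure. For multiplicativity, I would invoke the computation displayed immediately above the proposition: the $i$-th row of $M \times N$ is represented by $x^i \cdot (\boldsymbol{m}(x)\cdot\boldsymbol{n}(x)) \bmod (x^p-1)$, which in particular shows $M \times N$ is again circulant (its $i$-th row is the $x^i$-shift of its first row) and that its first row is $\boldsymbol{m}(x)\cdot\boldsymbol{n}(x) \bmod (x^p-1)$, i.e. $\varphi(M \times N) = \varphi(M)\cdot\varphi(N)$. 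Finally $\varphi$ sends the $p\times p$ identity matrix to the constant polynomial $1$, the multiplicative unit of $\F_2[x]/(x^p-1)$.

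There is no serious obstacle here; the only point requiring a little care is the bookkeeping that the matrix constructed from a polynomial is genuinely circulant, i.e. that the rows indexed by $i$ obtained as coefficient vectors of $x^i\boldsymbol{m}(x) \bmod (x^p-1)$ are exactly the successive cyclic right shifts of the first row. This is a direct consequence of the fact that multiplication by $x$ modulo $x^p-1$ is the cyclic shift operator on coefficient vectors, which one checks on a monomial $x^j$ and extends by linearity. Once this is in place, the homomorphism and bijectivity properties assemble immediately into the claimed ring isomorphism.
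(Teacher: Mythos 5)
Your proof is correct and follows essentially the same route as the paper, which derives the proposition directly from the preceding observation that the $i$-th row of a circulant matrix $M$ is $x^i\cdot\boldsymbol{m}(x) \bmod (x^p-1)$ and that the first row of $M\times N$ is $\boldsymbol{m}(x)\cdot\boldsymbol{n}(x) \bmod (x^p-1)$. You merely make explicit the bijectivity and homomorphism checks that the paper leaves implicit.
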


\begin{remark}
The first column of a circulant matrix $M$ defined by $\boldsymbol{m}(x)$ corresponds to  the polynomial
$\boldsymbol{m}^{\star}(x) = x^{p}\cdot{}\boldsymbol{m}(\frac{1}{x}) \mod (x^p-1)$. 
\end{remark}
Proposition \ref{prop:isomorphism} can be used to provide
a simple characterisation
of invertible matrices of circulant matrices:
\begin{proposition}
A $p \times p$ circulant matrix $M$ is invertible if and only if $\boldsymbol{m}(x)$ is prime with $x^p-1$.
\end{proposition}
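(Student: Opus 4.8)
The plan is to reduce the statement to a fact about the polynomial ring $\F_2[x]/(x^p-1)$ by means of Proposition~\ref{prop:isomorphism}, and then to invoke B\'ezout's identity. First I would observe that a circulant matrix is invertible as a matrix over $\F_2$ if and only if it is a unit of the ring $\mathfrak{C}_p$. One direction is immediate: a unit of $\mathfrak{C}_p$ is a fortiori an invertible matrix. For the converse one uses the characterisation of circulant matrices as those commuting with the cyclic shift permutation matrix $P$ (the circulant associated to the polynomial $x$): from $MP = PM$ and the invertibility of $M$ one deduces $M^{-1}P = PM^{-1}$, so that $M^{-1}$ is again circulant, i.e.\ lies in $\mathfrak{C}_p$. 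Hence matrix invertibility and ring invertibility coincide on $\mathfrak{C}_p$.

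Next, transporting the question through the isomorphism $\mathfrak{C}_p \simeq \F_2[x]/(x^p-1)$, which sends $M$ to $\boldsymbol{m}(x)$, we get that $M$ is invertible if and only if $\boldsymbol{m}(x)$ is a unit of $\F_2[x]/(x^p-1)$. It then remains to show that $\boldsymbol{m}(x)$ is invertible modulo $x^p-1$ exactly when $\gcd\bigl(\boldsymbol{m}(x),\,x^p-1\bigr)=1$. If this gcd equals $1$, then, $\F_2[x]$ being a Euclidean domain, B\'ezout's identity yields polynomials $u(x),v(x)$ with $u(x)\boldsymbol{m}(x)+v(x)(x^p-1)=1$, whence $u(x)\boldsymbol{m}(x)\equiv 1 \pmod{x^p-1}$ and $\boldsymbol{m}(x)$ is a unit. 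Conversely, if $u(x)\boldsymbol{m}(x)\equiv 1 \pmod{x^p-1}$ for some $u(x)$, then $u(x)\boldsymbol{m}(x)-1$ is a multiple of $x^p-1$, so any common divisor of $\boldsymbol{m}(x)$ and $x^p-1$ divides $1$ and is therefore a nonzero constant; thus the gcd is $1$. The coefficients $u,v$ are moreover produced by the extended Euclidean algorithm, which gives an effective inversion procedure for $M$.

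There is no serious obstacle here; the only point that deserves a little care is the first step, namely that the matrix inverse of a circulant matrix is again circulant, so that invertibility inside the full matrix ring cannot secretly provide an inverse lying outside $\mathfrak{C}_p$. Once this is settled, the equivalence is an immediate consequence of Proposition~\ref{prop:isomorphism} together with the fact that $\F_2[x]$ is a principal ideal (indeed Euclidean) domain.
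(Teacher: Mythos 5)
Your proof is correct, and its overall skeleton coincides with the paper's: both arguments reduce the statement to the fact that the matrix inverse of a circulant matrix is again circulant, then transport the question through the isomorphism $\mathfrak{C}_p \simeq \F_2[x]/(x^p-1)$ of Proposition~\ref{prop:isomorphism} and finish with B\'ezout's identity (a step the paper leaves implicit). Where you genuinely diverge is in how you establish that central fact. The paper argues constructively: it takes the first row $\boldsymbol{n}$ of the inverse $N$, observes that $\boldsymbol{n}(x)\cdot\boldsymbol{m}(x)=1 \bmod (x^p-1)$, multiplies by $x^i$ to see that the circulant matrix defined by $\boldsymbol{n}(x)$ is a two-sided inverse of $M$, and concludes that $N$ equals this circulant matrix by uniqueness of inverses. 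You instead invoke the characterisation of $\mathfrak{C}_p$ as the commutant of the cyclic shift matrix $P$ and note that $MP=PM$ together with invertibility forces $M^{-1}P=PM^{-1}$. Both routes are valid; your commutant argument is more conceptual and works over any base ring, while the paper's version has the advantage of exhibiting the inverse explicitly as the circulant matrix of $\boldsymbol{m}(x)^{-1}$, which is the form used later in the paper. Since the paper never states the commutant characterisation, you should at least record why a matrix commuting with $P$ must be circulant: a direct entrywise computation shows that $MP=PM$ is equivalent to $M_{i+1,j+1}=M_{i,j}$ for all $i,j$ taken modulo $p$, which is precisely the circulant condition.
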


\begin{proof}
One has only to prove that the invert of a circulant matrix $M$ defined by a polynomial  $\boldsymbol{m}(x)$ of $\F_2[x]/(x^p-1)$
is necessarily a circulant matrix. Assume that there exists $N$ such that $N \times M = M \times N = I_p$ with $I_p$ being the
$p \times p$ identity matrix. Let $\boldsymbol{n} = (n_0,\dots{},n_{p-1})$ be the first row of $N$. We have previously seen that
the product $\boldsymbol{n}\times M$ can be seen as the polynomial 
$\boldsymbol{n}(x) \cdot{} \boldsymbol{m}(x) \mod (x^p-1)$. This latter polynomial is equal to $1$ by assumption. 
Consequently, for any $i$ such that $0 \le i \le p-1$ we also have 
$ \Big(x^i\cdot{} \boldsymbol{n}(x) \Big)\cdot{} \boldsymbol{m}(x) = x^i \mod (x^p-1)$ which proves that the circulant matrix defined
by $\boldsymbol{n}(x)$ is the invert of $M$. Therefore $N$ is circulant.
\end{proof}

A matrix $G$ of size $k \times n$  is $p$-\textit{block circulant} 
with $k = k_0 p $ and $n = n_0 p$ 
where $k_0$ and $n_0$ are  positive integers if there exist $p \times p$ circulant matrices
$G_{i,j} \in \mathfrak{C}_p$ such that:
$$
G = \left (
\begin{array}{*{3}{c}}
G_{1,1}   &  \cdots{} & G_{1,n_0}\\
\vdots{} &    & \vdots{} \\
G_{k_0,1} & \cdots{}   & G_{k_0,n_0}
\end{array}
\right)
$$
It is straightforward to see that the set of block circulant matrices is stable by matrix 
addition and matrix multiplication. It is
therefore natural  to establish an identification between a block circulant matrix $G$ with a 
polynomial $k_0 \times n_0$ matrix 
$\boldsymbol{G}(x)$ with entries in $\F_2[x]/(x^p-1)$ by means of the mapping that sends 
each block $G_{i,j}$ to the polynomial $\boldsymbol{g}_{i,j}(x)$ defining it.

\begin{proposition}
Let  $\mathfrak{B}_{k_0,n_0}^p$ be the set of $p$-block circulant
matrices of size $k_0\times n_0$. 
Let $R_p = \F_2[x]/(x^p-1)$ and define by $\mathfrak{M}_{k_0,n_0}(R_p)$ 
the set of $k_0\times n_0$ 
matrices  with coefficients in $R_p$.
There exists a ring isomorphism 
between $\mathfrak{B}_{k_0,n_0}^p$ and $\mathfrak{M}_{k_0,n_0}(R_p)$:
$$
\begin{array}{rcl}
\mathfrak{B}_{k_0,n_0}^p &\simeq& \mathfrak{M}_{k_0,n_0}(R_p)\\
G   & \longmapsto & \boldsymbol{G}(x).
\end{array}
$$
\end{proposition}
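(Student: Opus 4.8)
The plan is to build directly on Proposition~\ref{prop:isomorphism}, applying it block by block. First I would check that the map $G \mapsto \boldsymbol{G}(x)$ is well defined: by the very definition of $\mathfrak{B}_{k_0,n_0}^p$, every block $G_{i,j}$ lies in $\mathfrak{C}_p$, so Proposition~\ref{prop:isomorphism} associates to it a unique polynomial $\boldsymbol{g}_{i,j}(x) \in R_p$; arranging these into a $k_0 \times n_0$ array yields a well-defined element $\boldsymbol{G}(x) \in \mathfrak{M}_{k_0,n_0}(R_p)$.

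Next I would establish bijectivity. Injectivity is immediate: if $\boldsymbol{G}(x) = \boldsymbol{H}(x)$ then $\boldsymbol{g}_{i,j}(x) = \boldsymbol{h}_{i,j}(x)$ for all $i,j$, and since the correspondence of Proposition~\ref{prop:isomorphism} is a bijection this forces $G_{i,j} = H_{i,j}$, hence $G = H$. Surjectivity is just as easy: given an arbitrary matrix in $\mathfrak{M}_{k_0,n_0}(R_p)$, apply the inverse of the isomorphism of Proposition~\ref{prop:isomorphism} to each of its entries to obtain circulant blocks, and assemble them into a $p$-block circulant matrix mapping onto it.

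It then remains to check that the map is compatible with the ring operations. Compatibility with addition is clear, since both block-circulant addition and matrix addition in $\mathfrak{M}_{k_0,n_0}(R_p)$ are performed entrywise on the blocks, and on each single circulant block addition corresponds to polynomial addition by Proposition~\ref{prop:isomorphism}. For the multiplicative part, take $G \in \mathfrak{B}_{k_0,m_0}^p$ and $H \in \mathfrak{B}_{m_0,n_0}^p$. The $(i,j)$ block of $G \times H$ is $\sum_{\ell=1}^{m_0} G_{i,\ell} \times H_{\ell,j}$; each term is a product of two $p\times p$ circulant matrices, hence again circulant, and by Proposition~\ref{prop:isomorphism} it corresponds to the polynomial $\boldsymbol{g}_{i,\ell}(x) \cdot \boldsymbol{h}_{\ell,j}(x) \mod (x^p-1)$. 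Since a sum of circulant matrices is circulant and corresponds to the sum of the associated polynomials, the $(i,j)$ block of $G \times H$ is identified with $\sum_{\ell=1}^{m_0} \boldsymbol{g}_{i,\ell}(x)\,\boldsymbol{h}_{\ell,j}(x)$, which is precisely the $(i,j)$ entry of the matrix product $\boldsymbol{G}(x)\,\boldsymbol{H}(x)$ computed in $\mathfrak{M}_{k_0,n_0}(R_p)$. Taking $k_0 = m_0 = n_0$, and noting that the $p$-block identity matrix is sent to the identity matrix over $R_p$, gives the claimed ring isomorphism; for rectangular sizes one gets the analogous compatibility of the module (multiplication-by-block-circulant) structures.

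The only real subtlety — hence the ``main obstacle'', though it is a mild one — is the bookkeeping: one must be sure that matrix multiplication of block matrices is \emph{literally} the same formula as matrix multiplication of the corresponding polynomial matrices, i.e. that the block structure is respected by products and no cross terms between distinct cyclic shifts appear. This was already observed in the paragraph preceding the proposition, so the whole argument reduces to a blockwise invocation of Proposition~\ref{prop:isomorphism}.
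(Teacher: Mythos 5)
Your proof is correct and follows exactly the route the paper intends: the paper gives no explicit proof of this proposition, relying instead on the immediately preceding observation that block circulant matrices are stable under addition and multiplication, and your blockwise invocation of Proposition~\ref{prop:isomorphism} simply fills in those details. Your added care about the rectangular case (where one only gets compatibility of module structures rather than a ring isomorphism) is a sound refinement of the statement as written.
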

In particular any  $p$-block circulant 
matrix $G$ is invertible if and only if $\det(\boldsymbol{G})(x)$ is prime with $x^p-1$
and its inverse  is also a $p$-block circulant matrix.

\subsection{Cyclic  and Quasi-Cyclic Codes} \label{sec:CyclicDef}
A (binary) linear code $\code{C}$ of length $n$ and dimension $k$ is a $k$-dimensional 
vector subspace of $\F_2^n$. The elements of a code are called \textit{codewords}. 
A \textit{generator matrix} $G$ of $\code{C}$ is a $k'\times n$ 
matrix with $k' \ge k$ whose rows generate $\code{C}$. A \textit{parity-check} matrix $H$ of $\code{C}$ is
an $r \times n$ matrix with $r \ge n-k$ such that for any codeword $\boldsymbol{c} \in \code{C}$ we have:
$$
H \times \boldsymbol{c}^T = 0.
$$
It is well-known that if a generator matrix of $\code{C}$ is of the form $(I | A)$
where $I$ is the identity matrix then $(A^T | I)$ is a parity-check matrix for $\code{C}$. Such a generator matrix is said
to be \textit{in reduced echelon form}. A code $\code{C}'$ is said to be \textit{permutation equivalent} to
$\code{C}$ if there exists a permutation 
of the symmetric group of order $n$ that reorders the coordinates of 
codewords of $\code{C}'$ into codewords of $\code{C}$. It is convenient to consider equivalent codes
as the same code.

\medskip

A \textit{cyclic} code $\code{C}$ of length $n$ is an ideal of the ring $\F_2[x]/(x^n-1)$. Such a code is characterised by
a unique polynomial $\boldsymbol{g}(x)$ divisor of $(x^n-1)$. Let $r$ be the degree of $\boldsymbol{g}(x)$. 
Any codeword $\boldsymbol{c}(x)$ is obtained as a product in 
$\F_2[x]$ of the form:
$$
\boldsymbol{c}(x) = \boldsymbol{m}(x) \cdot{} \boldsymbol{g}(x)
$$
where $\boldsymbol{m}(x)$ is a polynomial of $\F_2[x]$ of degree $n-1-r$. $\code{C}$ is a linear code of dimension 
$k = n - r$. The polynomial $\boldsymbol{g}(x)$ is called the \textit{generator polynomial} of the cyclic code $\code{C}$
and we shall write $\code{C} = < \boldsymbol{g}(x)>$.

A code $\code{C}$ is \textit{quasi-cyclic of index} $p$ if there exists a 
generator matrix $G$ that is $p$-block circulant. We assume that all the $G_{i,j}$'s are square matrices 
of size $p \times p$ and therefore $n = n_0 p$ and $k = k_0 p$. Cyclic codes of length $n$ 
are thus quasi-cyclic codes of index $n$ where a generator matrix is a circulant matrix 
associated to its generator polynomial.

\medskip

A useful method developed in \cite{Gab05}
for obtaining quasi-cyclic codes of length $n=pn_0$ and index $p$ is to consider 
a cyclic code $\code{C}$ generated by a polynomial $\boldsymbol{g}(x)$ 
and construct the subcode $S_{n_0}(\boldsymbol{c})$ spanned by a codeword $\boldsymbol{c}(x)$ and its $p-1$ shifts 
modulo $(x^n - 1)$ of $n_0$ bits $x^{n_0}\cdot{} \boldsymbol{c}(x),\dots{},x^{(p-1)n_0}\cdot{} \boldsymbol{c}(x)$. 
However  note that $S_{n_0}(\boldsymbol{c})$ does not admit a $p$-block circulant generator matrix. Actually, one has to
consider the equivalent code of $\code{C}$ obtained with the permutation $\pi$ that maps any 
$a n_0 + b$ to $b p + a$ with $1 \le a \le p-1$ and $0\le b \le n_0-1$. It means that up 
to a permutation any codeword $\boldsymbol{c}(x)$ of a cyclic code $\code{C}$ 
can be seen as a vector $\boldsymbol{c} = (\boldsymbol{c}_0,\dots{},\boldsymbol{c}_{n_0-1})$ where each 
$\boldsymbol{c}_i$ belongs to $\F_2^p \simeq \F_2[x]/(x^p-1)$ and such that the vector 
$\boldsymbol{c}' = (\boldsymbol{c}'_0,\dots{},\boldsymbol{c}'_{n_0-1})$ with 
$\boldsymbol{c}'_{j}(x) = x \cdot{} \boldsymbol{c}_{j}(x) \mod (x^p-1)$ 
is also a codeword of $S_{n_0}(\boldsymbol{c})$. 

\section{A McEliece Cryptosystem Based on Subcodes of a BCH Code} \label{sec:BCH}
\subsection{Description}

Let $\code{C}_0$ be a cyclic code of length $n = p n_0$ and
let $k$ be the dimension of $\code{C}_0$. 
Assume that  $\code{C}_0$ admits an $k' \times n$
generator matrix with $k' \ge k$ and such that $k' = p k_0$. For simplicity, 
we set $k' = k$.
Let $\boldsymbol{c}_1(x)$,
$\boldsymbol{c}_2(x)$,\dots{},$\boldsymbol{c}_{k_0-1}(x)$ be random codewords of $\code{C}_0$ and
consider the linear code $\code{C}$ defined as:
$$
\code{C} = S_{n_0}(\boldsymbol{c}_1) + \cdots{} + S_{n_0}(\boldsymbol{c}_{k_0-1}).
$$
We assume that $\code{C}$ is of dimension $k-p = p (k_0-1)$. Recall from
Section~\ref{sec:CyclicDef} that up 
to a permutation any $n$-bit vector $\boldsymbol{c}_i(x)$ with $1\le i\le k_0-1$ can be seen as 
a vector $(\boldsymbol{c}_{i,0},\dots{},\boldsymbol{c}_{i,n_0-1})$ where each 
$\boldsymbol{c}_{i,j}$ can also be seen as an element of $\F_2[x]/(x^p-1)$. Thus 
$\code{C}$ is a quasi-cyclic code of index $p$ whose generator matrix $\boldsymbol{G}(x)$ 
in $p$-block circulant form is: 
$$
\boldsymbol{G}(x) = 
\left (
\begin{array}{*{3}{c}}
\boldsymbol{c}_{1,1}(x)   &  \cdots{} & \boldsymbol{c}_{1,n_0}(x)\\
\vdots{} &    & \vdots{} \\
\boldsymbol{c}_{k_0-1,1}(x) & \cdots{}   & \boldsymbol{c}_{k_0-1,n_0}(x)
\end{array}
\right).
$$

The variant of the McEliece cryptosystem proposed in \cite{Gab05} starts from a secret subcode $\code{C}$ 
of dimension $ p (k_0-1)$ of a primitive BCH code $\code{C}_0$ obtained by the method explained above. 
A secret permutation  $\pi$ of the symmetric group of order $n_0$ hides the structure 
of $\code{C}$ while keeping its quasi-cyclic structure by publicly making  available a
generator matrix $\boldsymbol{G}^{\pi}(x)$ defined by:
$$
\boldsymbol{G}^{\pi}(x) = 
\left (
\begin{array}{*{3}{c}}
\boldsymbol{c}_{1,\pi(1)}(x)   &  \cdots{} & \boldsymbol{c}_{1,\pi(n_0)}(x)\\
\vdots{} &    & \vdots{} \\
\boldsymbol{c}_{k_0-1,\pi(1)}(x) & \cdots{}   & \boldsymbol{c}_{k_0-1,\pi(n_0)}(x)
\end{array}
\right).
$$
The cyclic code $\code{C}_0$ given in \cite{Gab05} is  a primitive BCH of length $2^m-1$ and dimension $n-tm$ where
$t$ is a positive integer. Two sets of parameters
are proposed respectively 
corresponding to $2^{100}$ and $2^{80}$ security levels.
\begin{itemize}
\item Parameters A: $m=12$, 
$t=26$, $p=91$, $n_0 = 45$, and $k_0 = 43$.
\item Parameters B:  $m=11$, $t=31$, $p=89$, $n_0 = 23$ and $k_0 = 21$.
\end{itemize}
Note that we always have $p > n_0$. This property will be useful for cryptanalyzing the cryptosystem.

\subsection{Structural Cryptanalysis}
We describe a method that recovers the secret permutation $\pi$ of 
the cryptosystem of \cite{Gab05} and thus reveals the secret 
key of any user. 
It exploits three facts: 
\begin{enumerate}
\item The code $\code{C}_0$ admits a binary $(n-k) \times n$ 
parity check matrix $H_0$ which can be assumed to be known. 
There are only a few different primitive BCH codes for a given parameter set $(n,m,t)$ and we can try all of them.
This is a consequence of the fact that the number of such codes is clearly upper-bounded
by the number of primitive polynomials of degree $m$.
For instance for the parameter set B, this number is equal to $176$.
 
\item Since $\code{C}$ is a subcode of $\code{C}_0$, any $n$-bit codeword 
$\boldsymbol{c}$ of $\code{C}$ must satisfy the equation:
\begin{equation} \label{BCHkeyEquation}
H_0 \times \boldsymbol{c}^T = 0.
\end{equation}

\item 
Permuting through a permutation $\pi$ the columns of a polynomial generator matrix 
$\boldsymbol{G}(x)$ 
of  $\code{C}$ can also be translated into a matrix product 
by the associated $n_0 \times n_0$ 
permutation matrix $\boldsymbol{\Pi}$ of $\pi$. Note that $\boldsymbol{\Pi}$ can also be seen as a polynomial matrix $\boldsymbol{\Pi}(x) \in \mathfrak{B}_{n_0,n_0}^p$ 
where 0 (resp. 1) entry corresponds to 0 (resp. 1) constant polynomial so that we have:
\begin{equation} \label{Eq:HidingEquation}
\boldsymbol{G}^{\pi}(x) =  \boldsymbol{G}(x) \times \boldsymbol{\Pi}(x). 
\end{equation}
Note that Equation~(\ref{Eq:HidingEquation}) can be rewritten as an equality between 
binary $p$-block circulant matrices:
\begin{equation} \label{Eq:BinaryHidingEquation}
G^{\pi} = G \times \Pi,
\end{equation}
where $G^{\pi}$ is the $(k-p) \times n$ public generator matrix and $\Pi = \boldsymbol{\Pi} \otimes I_p$
with $I_p$ being the $p \times p$ identity matrix. 
 Finding  $\boldsymbol{\Pi}$ actually amounts to solve a linear system of 
$n_0^2$ unknowns representing the entries of $\boldsymbol{\Pi}^{-1}$ such that:
\begin{equation} \label{Eq:UnknownPerm}
H_0 \times  \left(G^{\pi} \times \Pi^{-1} \right)^T = 0.
\end{equation}
In other words, each row of the public matrix $G^{\pi}$ after being permuted by $\Pi^{-1}$ must satisfy 
Equation~(\ref{BCHkeyEquation}). This is a linear system since $\Pi^{-1}$ may be rewritten as $\boldsymbol{\Pi}^{-1}\otimes I_p$.
 This means that each row of $G^{\pi}$ provides
$(n-k)$ binary linear equations verified by  $\Pi^{-1}$. Thus Equation~(\ref{Eq:UnknownPerm}) gives 
a total number of $(k-p)(n-k)$ linear equations that must be satisfied by $n_0^2$ unknowns. 
\end{enumerate}
The cryptanalysis of \cite{Gab05} amounts to solve an over-constrained
linear system constituted of $p^2(k_0-1)(n_0-k_0)$ equations and $n_0^2$ unknowns since 
as we have remarked that $p > n_0$. For instance, Parameters~B  give $529$ unknowns that should
satisfy $316,840$ equations.
As for Parameters~A we obtain $2,025$ unknowns that satisfy $695,604$
equations. Many of these equations are obviously linearly dependent.
The success of this method heavily depends 
on the size of the solution vector space.  
An implementation in Magma software actually
always gave in both cases a vector space of dimension one.
This revealed the secret permutation.

\section{A Cryptosystem Based on Quasi-Cyclic LDPC Codes} \label{sec:ldpc}

\subsection{Description}

LDPC codes are linear codes defined by  sparse binary 
parity-check matrices. We assume as in \cite{BaldiChiara07}
that $n = p n_0$ and $k = p (n_0-1)$, and we consider a 
parity-check matrix $H$ of the following form:
\begin{equation} \label{eq:PC-LDPC}
H = \left (
\begin{array}{*{3}{c}}
H_1 & \cdots{} & H_{n_0}
\end{array}
\right)
\end{equation}
where each matrix $H_j$ is a sparse circulant matrix of size $p \times p$. Without loss of generality,
$H_{n_0}$ is chosen to have full rank.
Each column of $H$ has a fixed weight $d_v$ which is very small compared to the length $n$. We also assume that
one has a good approximation of the number $t$ of correctable
errors through iterative decoding of the code defined
by $H$.

The quasi-cyclic LDPC cryptosystem proposed in \cite{BaldiChiara07} takes two
invertible $p$-block circulant matrices $S$ and $Q$ of size 
$k \times k$ and $n \times n$ respectively. The matrix $S$ (resp. $Q$) is chosen such that the weight of
each row and each column is $s$ (resp. $m$).
The private key consists of the parity-check matrix $H$ and  the matrices $S$ and $Q$.
In order to produce the public key, one has to compute  a generator matrix $G'$ in reduced echelon 
form and make public the matrix $G = S^{-1} \times G' \times Q^{-1}$. The plaintext space 
is the set $\F_2^k$ and the ciphertext space is $\F_2^n$. If one wishes to encrypt a message 
$\boldsymbol{x} \in \F_2^k$,  one has to randomly choose a $n$-bit vector $\boldsymbol{e}$ 
of weight $t' \le t/m$ and compute
$\boldsymbol{c} = \boldsymbol{x} \times G + \boldsymbol{e}$. 
The decryption step consists  in iteratively decoding 
$\boldsymbol{c} \times Q = \boldsymbol{x} \times S^{-1} \times G' + \boldsymbol{e}\times Q$ to output $\boldsymbol{z}=\boldsymbol{x} \times S^{-1}$ and then computing $\boldsymbol{x} =  \boldsymbol{z} \times S$. 
The crucial point that makes this cryptosystem valid is that  $\boldsymbol{e} \times Q$ is  
a  correctable error because its weight is less than or equal to $t'm$.

\subsection{Some Remarks on the Choice of the Parameters}

The authors suggest to take a matrix
$Q$ in diagonal form. They also suggest
the following values: 
$p = 4032$, $n_0 = 4$, $d_v = 13$ , $m=7$ and $t = 190$ ($t'= 27$). 
Finally, each block circulant matrix of 
$S$ has a column/row weight equals to $m$ so as to  have $s = m(n_0-1)$. 
Unfortunately, for this specific constraint, there is a flaw in this choice because the matrix $S$  
is not invertible. This follows from the fact that in this case $x-1$ always divides 
$\det(\boldsymbol{S})(x)$ which is therefore not coprime with $x^p-1$ and this implies that $\boldsymbol{S}(x)$ is not invertible. This can be proved by using the following arguments.

\begin{lemma} \label{lemma:weight}
Let $\boldsymbol{S}(x) = (\boldsymbol{s}_{i,j}(x))$  in $\mathfrak{M}_{n_0-1,n_0-1}(R_p)$ and define
the binary matrix $\tilde{S} = (\tilde{s}_{i, j})$ by 
$\tilde{s}_{i, j} = \wt{\boldsymbol{s}_{i, j}} \mod 2$. We have 
then:
$$
\normalfont
\det(\tilde{S}) = \wt{\det(\boldsymbol{S})} \mod 2.
$$
\end{lemma}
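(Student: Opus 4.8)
The plan is to recognise the operation ``reduce the weight modulo $2$'' as a ring homomorphism and then invoke the functoriality of the determinant. Concretely, I would first introduce the evaluation map $\varphi : R_p \to \F_2$ defined by $\varphi(\boldsymbol{v}(x)) = \boldsymbol{v}(1)$. Since $x^p-1$ vanishes at $x=1$ over $\F_2$, this is a well-defined ring homomorphism from $R_p = \F_2[x]/(x^p-1)$ onto $\F_2$. The elementary point is that for any $\boldsymbol{v}(x)=\sum_i v_i x^i$ one has $\varphi(\boldsymbol{v}(x)) = \sum_i v_i = \wt{\boldsymbol{v}} \bmod 2$; in particular $\tilde{s}_{i,j} = \varphi(\boldsymbol{s}_{i,j}(x))$ for every entry, and likewise $\wt{\det(\boldsymbol{S})}\bmod 2 = \varphi\big(\det(\boldsymbol{S})(x)\big)$, the latter making sense because $\det(\boldsymbol{S})$ is a well-defined element of the commutative ring $R_p$.

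Second, I would apply $\varphi$ entrywise to $\boldsymbol{S}(x)$, which by the preceding remark produces exactly the binary matrix $\tilde{S}$. It then remains to check that the determinant commutes with the entrywise application of $\varphi$, i.e. $\det(\varphi(\boldsymbol{S})) = \varphi(\det(\boldsymbol{S}))$. This is immediate from the Leibniz expansion $\det(\boldsymbol{S}) = \sum_{\sigma} \prod_i \boldsymbol{s}_{i,\sigma(i)}$ (the signs are irrelevant in characteristic $2$, and in any case $\varphi$ is the identity on the constants $\F_2 \subseteq R_p$): since $\varphi$ is a ring homomorphism it sends this sum of products of entries in $R_p$ to the corresponding sum of products of their images in $\F_2$, which is precisely $\det(\tilde{S})$ computed over $\F_2$. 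Chaining the two identities gives $\det(\tilde{S}) = \varphi(\det(\boldsymbol{S})) = \wt{\det(\boldsymbol{S})} \bmod 2$, as claimed.

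There is essentially no obstacle here; the only thing needing a word of care is that the determinant on the left-hand side is taken over $\F_2$ whereas the one on the right-hand side lives in $R_p$, and that these two computations are linked by the homomorphism $\varphi$ rather than being literally the same. Once this naturality of the determinant under ring homomorphisms is set up, the statement is a one-line consequence.
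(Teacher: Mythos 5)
Your proof is correct and follows essentially the same route as the paper's: both arguments come down to the observation that reduction of the weight modulo $2$ is a ring homomorphism from $R_p$ onto $\F_2$, combined with the fact that the determinant commutes with entrywise application of a ring homomorphism. The only (cosmetic) difference is how the homomorphism property is verified --- you identify the map as evaluation at $x=1$, which is well defined since $x^p-1$ vanishes there, whereas the paper checks additivity and multiplicativity of $\wt{\cdot}\bmod 2$ directly from the identity $\wt{\boldsymbol{u}+\boldsymbol{v}} = \wt{\boldsymbol{u}} + \wt{\boldsymbol{v}} - 2\wt{\boldsymbol{u}\star \boldsymbol{v}}$.
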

\begin{proof}
This comes from the fact that 
$\wt{\boldsymbol{u}+\boldsymbol{v}} = \wt{\boldsymbol{u}} + \wt{\boldsymbol{v}} -
2\wt{\boldsymbol{u}\star \boldsymbol{v}}$ for any $\boldsymbol{u}(x)$ and $\boldsymbol{v}(x)$ 
in $\F_2[x]$ which implies that:
$$
\left\{
\begin{array}{lcl}
\wt{\boldsymbol{u}+\boldsymbol{v}} &=& \wt{\boldsymbol{u}} + \wt{\boldsymbol{v}} \mod 2\\ 
\wt{\boldsymbol{u} \cdot{} \boldsymbol{v}} &=& \wt{\boldsymbol{u}} \cdot{} \wt{\boldsymbol{v}} \mod 2.
\end{array}
\right.
$$
\end{proof}

\begin{proposition} For any $\boldsymbol{S}(x)$ in  
$\mathfrak{M}_{3,3}(R_p)$ such that each $\boldsymbol{s}_{i,j}$ 
is of weight $m$ then $x-1$ divides $\det(\boldsymbol{S})(x)$.
\end{proposition}
\begin{proof}
By using the same notation as in the previous lemma we know that $\det(\tilde{S})$ is equal to zero since $\tilde{S}$ is the all one matrix.
From the previous lemma it follows that
$\det(\boldsymbol{S})(x)$ has a support of even weight. This implies that $x-1$ divides $\det(\boldsymbol{S})(x)$.
\end{proof}

In order to avoid this situation we introduce as few  polynomials of weight different from $m$ in $\boldsymbol{S}$
such that $\det(\tilde{S}) = 1$. A possible choice is the following one. First we choose a nonsingular $\tilde{S}$  equal to
$$
\tilde{S} = \left(
  \begin{array}{ccc}
    1 & 1 & 1\\
    1 & 0 & 1\\
    0 & 1 & 1\\
  \end{array}
\right)
$$
When $\tilde{s}_{ij}=1$ we choose the corresponding entry $\boldsymbol{s}_{ij}(x)$ to be of weight $m$ and if
$\tilde{s}_{ij}=0$ we choose the corresponding entry $\boldsymbol{s}_{ij}(x)$ to be of weight $m-1$.

It should also be mentioned that a decoding attack searching for a
 word of weight less than $t = 27$ in a code of length $n=
 16128$ and dimension $k=12096$ as proposed by  using
 the algorithm given in \cite{CC94} has a work factor of about
 $2^{78.5}$. Note that this work
 factor may even be decreased with  the algorithm of  \cite{CC95}.

\subsection{Structural Attack}
\subsubsection{Preliminaries}

The goal of this attack is to recover the secret code $\code{C}$ defined by the parity-check
matrix $H$ given in Equation~(\ref{eq:PC-LDPC}).
We know that $S$ and $Q$ are equivalently defined by polynomials $\boldsymbol{s}_{i,j}(x)$ and 
$\boldsymbol{q}_{i,j}(x)$ respectively. $Q$ is chosen to be in diagonal form, that is to say 
$\boldsymbol{q}_{i,j}(x) = 0$ if $i \not = j$. For the sake of simplicity, 
we set $\boldsymbol{q}_i(x) = \boldsymbol{q}_{i,i}(x)$. Moreover 
the polynomials $\boldsymbol{q}_i(x)$ are  invertible modulo $x^p-1$ since $Q$ is invertible. 
It is also straightforward
to remark that the secret generator matrix $G'$ is equal to:
$$
G' = \left (
\begin{array}{ccc|c}
            & & & (H_{n_0}^{-1} H_1)^T\\
& I_{k} &   & \vdots{}\\
&       &     & (H_{n_0}^{-1} H_{n_0-1})^T
\end{array}
\right).
$$ 
In others words, if we denote by $G_{\le k}$ the matrix obtained by taking the 
$k$ first columns of $G$ then we have:
$$
G_{\le k} = S^{-1}
\times 
\left (
\begin{array}{*{4}{c}}
Q_1^{-1}   & 0 &  \cdots{} & 0\\
0  & \ddots{}  & \ddots{} & \vdots{} \\
\vdots{} & \ddots{}  & \ddots{} & 0 \\
0 & \cdots{}   & 0 & Q_{n_0-1}^{-1}
\end{array}
\right).
$$
This implies that $G_{\le k}^{-1}$ is a $p$-block circulant matrix defined by polynomials 
$\boldsymbol{g}_{i,j}(x)$ that satisfies the following equations:
\begin{equation} \label{keyEq:LDPC}
\boldsymbol{g}_{i,j}(x) = \boldsymbol{q}_{i}(x) \cdot{} \boldsymbol{s}_{i,j}(x)  \mod (x^p-1).
\end{equation}
Note that the weight of $\boldsymbol{g}_{i,j}(x)$ is at most $m^2$.  Actually,
due the fact that the secret 
polynomials have very low weights, 
we shall see that 
the support of $\boldsymbol{g}_{i,j}(x)$ is exactly $m^2$ with a good probability. 
For the sake of simplicity, we set
$\boldsymbol{q}_{i}(x) = x^{e_1}+ \cdots{} + x^{e_m}$ and 
$\boldsymbol{s}_{i,j}(x) = x^{\ell_1}+ \cdots{} + x^{\ell_m}$ with $0\le e_a \le p-1$ 
and $0 \le \ell_a \le p-1$ for any $1 \le a \le m$.
We fix $\boldsymbol{q}_{i}(x)$ and we assume that
the monomials $x^{\ell_a}$ of $\boldsymbol{s}_{i,j}(x)$ are 
independently and uniformly chosen. We wish to estimate 
the probability that the support of $\boldsymbol{g}_{i,j}(x)$ contains 
the support of at least  one shift $x^{\ell_a} \cdot{} \boldsymbol{q}_{i}(x)$,
and 
the probability that the weight of $\boldsymbol{g}_{i,j}(x)$ is exactly $m^2$. 

\begin{lemma} \label{lem:DisjointShifts}
Let $\ell_1,\dots{},\ell_w$ be $w$ different integers such that 
$0\le \ell_a \le p-1$  for $1 \le a \le w$. For any  random integer
$0\le \ell \le p-1$ such that $\ell$ is different from
$\ell_1,\dots{},\ell_w$, we have:  
$$
\prob{\left(x^{\ell_1}+\cdots{} + x^{\ell_w} \right) \cdot{} \boldsymbol{q}_{i}(x) \star x^{\ell} \cdot{} \boldsymbol{q}_{i}(x)
\not = 0} \le w\frac{m(m-1)}{p-w}
$$
\end{lemma}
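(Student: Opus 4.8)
The plan is to reduce the event in question to a purely combinatorial condition on $\ell$ and then count. Write $\boldsymbol{q}_i(x) = x^{e_1} + \cdots + x^{e_m}$, so that for any integer $r$ the polynomial $x^r \cdot \boldsymbol{q}_i(x) \bmod (x^p-1)$ has support $\{\, r + e_b \bmod p : 1 \le b \le m \,\}$. Since we are working over $\F_2$, the support of $\left(x^{\ell_1} + \cdots + x^{\ell_w}\right)\cdot \boldsymbol{q}_i(x) = \sum_{a=1}^{w} x^{\ell_a}\cdot\boldsymbol{q}_i(x)$ is contained in $\bigcup_{a=1}^{w}\{\, \ell_a + e_b \bmod p : 1 \le b \le m \,\}$ (possibly strictly, because of cancellations). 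Consequently, if the intersection polynomial $\left(x^{\ell_1}+\cdots+x^{\ell_w}\right)\cdot\boldsymbol{q}_i(x) \star x^{\ell}\cdot\boldsymbol{q}_i(x)$ is nonzero, there must exist indices $a \in \{1,\dots,w\}$ and $b, c \in \{1,\dots,m\}$ with
$$
\ell_a + e_b \equiv \ell + e_c \pmod p .
$$

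First I would rule out $b = c$: if $b = c$ the displayed congruence becomes $\ell \equiv \ell_a \pmod p$, which is impossible since $\ell$ is assumed different from each $\ell_a$ and all the integers lie in $\{0,\dots,p-1\}$. Hence $b \ne c$, and the congruence rewrites as $\ell \equiv \ell_a + e_b - e_c \pmod p$. Therefore the event ``the intersection polynomial is nonzero'' implies $\ell \in B$, where
$$
B = \left\{\, \ell_a + e_b - e_c \bmod p \ :\ 1 \le a \le w,\ 1 \le b, c \le m,\ b \ne c \,\right\} .
$$
For each fixed $a$ there are $m(m-1)$ ordered pairs $(b,c)$ with $b \ne c$, so $|B| \le w\,m(m-1)$.

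Finally I would conclude with the counting argument: by hypothesis $\ell$ is drawn uniformly from $\{0,\dots,p-1\}\setminus\{\ell_1,\dots,\ell_w\}$, a set of size $p - w$, so
$$
\prob{\left(x^{\ell_1}+\cdots+x^{\ell_w}\right)\cdot\boldsymbol{q}_i(x) \star x^{\ell}\cdot\boldsymbol{q}_i(x) \ne 0} \ \le\ \frac{|B|}{p - w} \ \le\ \frac{w\,m(m-1)}{p - w},
$$
which is the claimed bound.

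There is no genuine obstacle here; the only points requiring a little care are (i) that one argues with the support of a sum over $\F_2$, so the containment $\operatorname{supp}\!\left(\sum_a x^{\ell_a}\boldsymbol{q}_i\right) \subseteq \bigcup_a \operatorname{supp}\!\left(x^{\ell_a}\boldsymbol{q}_i\right)$ may be strict — which is harmless since we only want an upper bound — and (ii) that the ``bad'' values $\ell_a + e_b - e_c \bmod p$ may coincide with one another or with some $\ell_{a'}$, which again only decreases $|B|$ and hence strengthens the inequality. Note also that the estimate does not require any independence assumption: here $\boldsymbol{q}_i(x)$ is fixed and the randomness is entirely in the choice of $\ell$.
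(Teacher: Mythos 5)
Your proof is correct and follows essentially the same route as the paper's: both reduce the event to the congruence $\ell_a+e_b\equiv \ell+e_c \pmod p$ and bound the probability by counting the at most $w\,m(m-1)$ bad values of $\ell$ among the $p-w$ admissible ones (the paper phrases this as a union bound over $a$, you phrase it as a single count of the bad set $B$, which is the same estimate). Your version is slightly more careful in explicitly ruling out $b=c$ and in noting that cancellations over $\F_2$ only help, but there is no substantive difference.
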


\begin{proof}
Set first $\boldsymbol{r}(x) = \left(x^{\ell_1}+\cdots{} + x^{\ell_w} \right) \cdot{} \boldsymbol{q}_{i}(x)$.
By the union bound we have: 
$$
\prob{\boldsymbol{r}(x) \star x^{\ell} \cdot{} \boldsymbol{q}_{i}(x) \not = 0} \le
\sum_{a=1}^w \prob{x^{\ell_a} \cdot{} \boldsymbol{q}_{i}(x) \star x^{\ell} \cdot{} \boldsymbol{q}_{i}(x) \not = 0}
$$
The probability $\prob{x^{\ell_a} \cdot{} \boldsymbol{q}_{i}(x) \star x^{\ell} \cdot{} \boldsymbol{q}_{i}(x) \not = 0}$
is at most the fraction of integers $\ell$ 
different from $\ell_1,\dots{},\ell_w$ such that  there exist 
$1 \le b \le m$ and $1 \le c \le m$ with:
$$
\ell_a +e_b = \ell+ e_c \mod p.
$$ 
Thus, this fraction is given by the ratio 
of the number of pairs $(e_b,e_c)$ with $b \not = c$ to the number of possible 
values for $\ell$ which is exactly $m(m-1)/(p-w)$.
\end{proof}

\begin{proposition}
The probability 
$\prob{x^{\ell} \cdot{} \boldsymbol{q}_{i}(x) \subset \boldsymbol{g}_{i,j}(x)}$
for $\ell$ in $\{\ell_1,\dots{},\ell_m\}$ 
that the support of $\boldsymbol{g}_{i,j}(x)$
contains the support of 
$x^\ell\cdot{} \boldsymbol{q}_{i}(x)$ is lower-bounded by:
$$
\prob{x^{\ell} \cdot{} \boldsymbol{q}_{i}(x) \subset \boldsymbol{g}_{i,j}(x)}
 \ge \left(1- \frac{m(m-1)}{p-1}\right)^{m-1}.
$$

\end{proposition}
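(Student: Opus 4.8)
The plan is to isolate, inside the polynomial
$\boldsymbol{g}_{i,j}(x)=\sum_{a=1}^{m}x^{\ell_a}\cdot\boldsymbol{q}_i(x)\bmod(x^p-1)$,
the single shift $x^{\ell}\cdot\boldsymbol{q}_i(x)$ and to argue that it contributes its full support to $\boldsymbol{g}_{i,j}(x)$ as soon as it overlaps none of the $m-1$ remaining shifts. Concretely I would write $\boldsymbol{g}_{i,j}(x)=x^{\ell}\cdot\boldsymbol{q}_i(x)+\boldsymbol{r}(x)\bmod(x^p-1)$, where $\boldsymbol{r}(x)=\sum_{\ell_a\neq\ell}x^{\ell_a}\cdot\boldsymbol{q}_i(x)\bmod(x^p-1)$ collects the other $m-1$ shifts. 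If $x^{\ell}\cdot\boldsymbol{q}_i(x)\star\big(x^{\ell_a}\cdot\boldsymbol{q}_i(x)\big)=0$ for every $a$ with $\ell_a\neq\ell$, then a fortiori $x^{\ell}\cdot\boldsymbol{q}_i(x)\star\boldsymbol{r}(x)=0$, since the support of $\boldsymbol{r}(x)$ is contained in the union of the supports of the individual shifts; hence each position of $\boldsymbol{q}_i(x)$'s $\ell$-shift receives contribution $1$ from that shift and $0$ from $\boldsymbol{r}(x)$, no cancellation occurs, and $x^{\ell}\cdot\boldsymbol{q}_i(x)\subset\boldsymbol{g}_{i,j}(x)$. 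So it suffices to lower-bound the probability that $x^{\ell}\cdot\boldsymbol{q}_i(x)$ is disjoint from each of the $m-1$ other shifts.

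For a single index $a$ with $\ell_a\neq\ell$, I would invoke Lemma~\ref{lem:DisjointShifts} in its $w=1$ instance, taking the fixed exponent to be $\ell$ and the random one to be $\ell_a$ (the intersection product being symmetric, the two exponents may be exchanged). This yields
\[
\prob{x^{\ell}\cdot\boldsymbol{q}_i(x)\star\big(x^{\ell_a}\cdot\boldsymbol{q}_i(x)\big)\neq0}\le\frac{m(m-1)}{p-1},
\]
equivalently the no-overlap probability is at least $1-\frac{m(m-1)}{p-1}$. The underlying count is that $x^{\ell_a}\cdot\boldsymbol{q}_i(x)$ meets $x^{\ell}\cdot\boldsymbol{q}_i(x)$ precisely when $\ell_a\equiv\ell+e_c-e_b\pmod p$ for some $b\neq c$, which rules out at most $m(m-1)$ of the $p-1$ admissible values of $\ell_a$.

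To finish, I would use that the $m-1$ exponents $\ell_a$ indexed by the positions different from the distinguished one are chosen independently, so the $m-1$ no-overlap events are mutually independent and, by the reduction of the first paragraph,
\[
\prob{x^{\ell}\cdot\boldsymbol{q}_i(x)\subset\boldsymbol{g}_{i,j}(x)}\ \ge\ \prod_{a:\ \ell_a\neq\ell}\prob{x^{\ell}\cdot\boldsymbol{q}_i(x)\star\big(x^{\ell_a}\cdot\boldsymbol{q}_i(x)\big)=0}\ \ge\ \left(1-\frac{m(m-1)}{p-1}\right)^{m-1},
\]
which is the claimed inequality.

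The step that needs the most care is the reduction in the first paragraph: the containment $x^{\ell}\cdot\boldsymbol{q}_i(x)\subset\boldsymbol{g}_{i,j}(x)$ can also hold through overlaps that cancel in even multiplicity, so disjointness is only a sufficient condition, and one must be sure the bound is applied in the direction that produces a lower bound (never an equality). One should also be explicit about the probabilistic model that makes the denominator exactly $p-1$: each remaining exponent uniform over the $p-1$ residues distinct from $\ell$ and the draws independent, which is the i.i.d.\ model announced before the statement. If one instead insisted that all $m$ exponents be pairwise distinct, a sequential revelation of the other exponents would give the slightly smaller product $\prod_{k=1}^{m-1}\big(1-\frac{m(m-1)}{p-k}\big)$ — a negligible difference for the parameters at hand, but a point worth noting.
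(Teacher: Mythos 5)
Your proof is correct and follows essentially the same route as the paper: the paper's argument is precisely to take $w=1$ in Lemma~\ref{lem:DisjointShifts} and invoke the independence of the $m-1$ remaining monomials of $\boldsymbol{s}_{i,j}(x)$, which is what you spell out (together with the useful observation that disjointness is only a sufficient condition, explaining why one gets a lower bound rather than an equality).
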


\begin{proof}
This inequality is obtained by taking $w=1$ in Lemma~\ref{lem:DisjointShifts}
and by the independence of the choice of the $(m-1)$
other monomials of $\boldsymbol{s}_{i,j}(x)$.
\end{proof}

\begin{proposition}
The probability $q$  that $\boldsymbol{g}_{i,j}(x)$ is exactly of weight 
$m^2$ is lower-bounded by:
$$
q \ge \prod_{w=1}^{m-1} \left( 1 - w\cdot{} \frac{m(m-1)}{p-w} \right).
$$
\end{proposition}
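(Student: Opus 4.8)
The plan is to first rewrite the event $\wt{\boldsymbol{g}_{i,j}} = m^2$ as a pairwise-disjointness statement about $m$ cyclic shifts of $\boldsymbol{q}_{i}(x)$, and then to bound the probability of that event by exposing the exponents $\ell_1,\dots,\ell_m$ of $\boldsymbol{s}_{i,j}(x)$ one at a time and applying Lemma~\ref{lem:DisjointShifts} at each exposure. By Equation~(\ref{keyEq:LDPC}), $\boldsymbol{g}_{i,j}(x) = \sum_{a=1}^{m} x^{\ell_a}\cdot\boldsymbol{q}_{i}(x) \mod (x^p-1)$ is a sum of $m$ cyclic shifts of $\boldsymbol{q}_{i}(x)$, each of weight $m$. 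Over $\F_2$ the coefficient of $x^k$ in this sum is nonzero precisely when an odd number of the shifts are nonzero at position $k$, so $\wt{\boldsymbol{g}_{i,j}} \le m^2$, with equality if and only if the supports of the $m$ shifts are pairwise disjoint. Hence $q$ equals the probability of this disjointness event, and the whole problem reduces to estimating that probability.

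For $1\le w\le m$ I would introduce the event $A_w$ that the first $w$ shifts $x^{\ell_1}\cdot\boldsymbol{q}_{i}(x),\dots,x^{\ell_w}\cdot\boldsymbol{q}_{i}(x)$ have pairwise disjoint supports. These events are nested, $A_m\subseteq\cdots\subseteq A_1$, the event $A_1$ is certain, and the chain rule gives $q = \prob{A_m} = \prod_{w=1}^{m-1}\prob{A_{w+1}\mid A_w}$. For the inductive step, set $\boldsymbol{r}(x) = (x^{\ell_1}+\cdots+x^{\ell_w})\cdot\boldsymbol{q}_{i}(x) \mod (x^p-1)$: conditionally on $A_w$ there are no cancellations in $\boldsymbol{r}(x)$, so its support is exactly the disjoint union of the supports of the first $w$ shifts, and therefore, conditionally on $A_w$, the event $A_{w+1}$ coincides with the event $\boldsymbol{r}(x) \star x^{\ell_{w+1}}\cdot\boldsymbol{q}_{i}(x) = 0$. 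Since $\boldsymbol{s}_{i,j}(x)$ has weight exactly $m$, its exponents are distinct, so $\ell_{w+1}$ may be taken uniform over the $p-w$ elements of $\{0,\dots,p-1\}$ not equal to $\ell_1,\dots,\ell_w$; Lemma~\ref{lem:DisjointShifts} then bounds $\prob{\boldsymbol{r}(x)\star x^{\ell_{w+1}}\cdot\boldsymbol{q}_{i}(x)\ne 0}$ by $w\,m(m-1)/(p-w)$, uniformly over the prefix $\ell_1,\dots,\ell_w$. This yields $\prob{A_{w+1}\mid A_w}\ge 1 - w\,m(m-1)/(p-w)$, and multiplying over $w=1,\dots,m-1$ produces the announced lower bound on $q$.

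The step that needs the most care is the identification, conditional on $A_w$, of $A_{w+1}$ with the single intersection-polynomial condition $\boldsymbol{r}(x)\star x^{\ell_{w+1}}\cdot\boldsymbol{q}_{i}(x)=0$: in general the requirement that $x^{\ell_{w+1}}\cdot\boldsymbol{q}_{i}(x)$ be disjoint from each of the earlier shifts is strictly stronger than that condition, because $\boldsymbol{r}(x)$ could lose support through internal cancellation, and it is precisely the conditioning on $A_w$ that collapses the two events. One must also take care to draw the fresh exponent $\ell_{w+1}$ from the complement of the already chosen ones, as required by Lemma~\ref{lem:DisjointShifts} and consistent with $\boldsymbol{s}_{i,j}(x)$ having weight exactly $m$; beyond this the argument is the routine union bound already carried out inside Lemma~\ref{lem:DisjointShifts}, combined with the chain rule.
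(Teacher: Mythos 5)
Your proposal is correct and follows essentially the same route as the paper: the paper's proof introduces the events $E_w$ that the $w$-th shift has empty intersection polynomial with the sum of the previous ones and applies the chain rule together with Lemma~\ref{lem:DisjointShifts}, which is exactly your decomposition since your $A_w$ is the intersection $E_1\cap\cdots\cap E_w$. Your write-up is in fact slightly more careful than the paper's in spelling out why, conditional on the previous shifts being pairwise disjoint, the single intersection-polynomial condition is equivalent to full disjointness, a point the paper leaves implicit.
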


\begin{proof}
For any $2 \le w \le m$, let $E_w$ denote the event that  
$$
E_w ~:~(x^{\ell_1}+ \cdots{} + x^{\ell_{w-1}})\cdot{}\boldsymbol{q}_{i}(x)\star  x^{\ell_w}\cdot{}\boldsymbol{q}_{i}(x) = 0
$$
when each monomial 
$x^{\ell_a}$ is uniformly and independently chosen. We also set $E_1$ as the whole universe.
Then we have:
$$
q \ge \prob{E_2\cap\cdots{}\cap E_m}
$$
Using Bayes' rule we also have  
$$
\prob{E_2\cap\cdots{}\cap E_m} = \prod_{w=1}^m\prob{E_w | E_{w-1}\cap\cdots{}\cap E_1}.
$$
But by 
Lemma~\ref{lem:DisjointShifts} we know that 
$\prob{E_w | E_{w-1}\cap\cdots{}\cap E_1} \ge \left( 1 - w\cdot{} \frac{m(m-1)}{p-w} \right)$.
\end{proof}
\subsubsection{Different Strategies}
\paragraph{First Strategy}

We have seen in Lemma~\ref{lem:DisjointShifts} that the support of 
$\boldsymbol{g}_{i,j}(x)$ contains with very high probability the support of
at least\footnote{Actually, the support of 
$\boldsymbol{g}_{i,j}(x)$ contains with  good probability 
all the supports of $x^{\ell_a}\cdot{} \boldsymbol{q}_{i}(x)$ with $1\le a \le m$
since $q \ge 0.79$ for the proposed parameters.}
a shifted version of $\boldsymbol{q}_{i}(x)$ since
for the parameters given in \cite{BaldiChiara07}, 
we obtain $\prob{x^{\ell} \cdot{} \boldsymbol{q}_{i}(x) \subset \boldsymbol{g}_{i,j}(x)} \ge 0.94$. 
One possible strategy to recover the polynomial $\boldsymbol{q}_{i}(x)$ consists in enumerating $m$-tuples $u_1,\dots{},u_m$ 
that belong in the support of $\boldsymbol{g}_{i,j}(x)$ in order to form
$\boldsymbol{u}(x) =\sum_{a} x^{u_a}$ such that $\boldsymbol{u}^{-1}(x)\cdot{} \boldsymbol{g}_{i,j'}(x)$
is of weight $m$ for $1 \le j' \le n_0 - 1$.
The cost of this attack is $O\left( \binom{m^2}{m}\cdot{}p^2\right)$ which corresponds
to $2^{50.3}$ operations for the specific parameters proposed.

\paragraph{Second Strategy} \label{sec:stg2}

We present another strategy that can be used to recover secret matrices $S$ and 
thus matrices $Q_1,\dots{},Q_{n_0-1}$. This strategy requires to search for codewords of very 
low weight in a linear code. The most efficient algorithm that accomplishes this
task is  the algorithm of \cite{BLP08}  which improves 
upon Stern's algorithm \cite{Stern88}.
However in order to derive a simple bound on the time complexity, we consider this second 
algorithm as in \cite{BaldiChiara07}.
The work factor $\Omega_{n,k,w}$ 
of Stern's algorithm to find $A_w$ codewords of weight 
$w$ in a code of length $n$ and dimension $k$ satisfies 
$\Omega_{k,n,w} \ge \frac{N}{A_w P_w}$
where $(g,\ell)$ are two parameters  and 
$N$ is the number of binary operations required for each iteration 
\begin{equation} \label{eq:iteraStern}
N = (n-k)^3/2 + k(n-k)^2 + 2g \ell \binom{k/2}{g} + 2g(n-k) \frac{\binom{k/2}{g}^2}{2^\ell}.
\end{equation}
$P_w$ represents the probability of finding a given codeword of weight $w$
$$
P_w = \frac{\binom{w}{g} \binom{n-w}{k/2-g}}{\binom{n}{k/2}} \frac{\binom{w-g}{g}\binom{n-k/2-w+g}{k/2-g}}{\binom{n-k/2}{k/2}}
\frac{\binom{n-k-w+2g}{\ell}}{\binom{n-k}{\ell}}.
$$
Recall that $G_{\le k}^{-1}$ is specified by polynomials 
$\boldsymbol{g}_{i,j}(x)$. Let $\boldsymbol{d}_{i,j}(x)$ be the polynomial
$\boldsymbol{g}_{i,j}(x) \cdot{} \boldsymbol{g}_{i,1}^{-1}(x) \mod (x^p-1)$ and consider
the code $\code{E}_{i}$ defined by the following generator matrix:
$$
E_i = \left (
\begin{array}{cccc}
I_p & D_{i,2} & \cdots{} & D_{i,n_0-1}
\end{array}
\right)
$$
where as usual the circulant matrix $D_{i,j}$ is characterised by 
 the polynomial  $\boldsymbol{d}_{i,j}(x)$. Then  $\code{E}_{i}$ contains at least $p$ codewords
of low weight $(n_0-1)m = 21$ since
$$
S_{i,1} \times E_i = 
\left (
\begin{array}{cccc}
S_{i,1} & S_{i,2} & \cdots{} & S_{i,n_0-1}
\end{array}
\right).
$$
It is therefore possible to recover matrices $S_{i,1},\dots{},S_{i,n_0-1}$ 
with a complexity of  $2^{32}$ operations by applying Stern's algorithm
with $(g,\ell) = (3,43)$ in order to find a codeword of weight $21$ in a code of dimension $p$ and length 
$(n_0-1)p = 12096$.

\subsubsection{Extraction of the Secret Code}

After recovering $S$, $Q_1,\dots{},Q_{n_0-1}$, 
one is therefore able to compute the following generator matrix $\tilde{G}$
defined by:
$$
\tilde{G} 
= G' \times 
\left (
\begin{array}{*{4}{c}}
I_p   & 0 &  \cdots{} & 0\\
0  & \ddots{}  & \ddots{} & \vdots{} \\
\vdots{} & \ddots{}  & I_p & 0 \\
0 & \cdots{}   & 0 & Q_{n_0}^{-1}
\end{array}
\right)
= \left (
\begin{array}{ccc|c}
            & & & A_1 \\
& I_{k} &   & \vdots{}\\
&       &     & A_{n_0-1}
\end{array}
\right) 
$$
where for $1\le i \le n_0-1$, we set $A_i =  (H_{n_0}^{-1} \times H_i)^T \times Q_{n_0}^{-1}$. 
Recall that matrices $H_1,\dots{},H_{n_0}$ and $Q_{n_0}$ are still unknown. However,
one can easily check that for any different $i$ and $j$, 
we also have  $(A_i\times A_j^{-1})^{T}$ = $H_i \times H_j^{-1}$ whenever
$H_j$ is invertible. Thus,
if we set $B_{i,j} = (A_i\times A_j^{-1})^T$  then for a fixed 
$1\le i \le n_0-1$ and for any different integers $j$ and $j'$, we
have that $H_j\times B_{i,j} = H_{j'} \times B_{i,j'} = H_i$. 
Consider now the code defined by the following generator matrix $G_1$:
$$
G_1 = \left (
\begin{array}{*{4}{c}}
I_p & B_{2,1} & \cdots{} & B_{n_0-1,1}
\end{array}
\right).
$$
It is easy to see that $H_1 \times G_1 = \left (
\begin{array}{*{4}{c}}
H_1 & H_{2} & \cdots{} & H_{n_0-1}
\end{array}
\right)
$. This also means that $G_1$ spans a code with a minimum distance that is smaller 
than $(n_0-1)d_v$. Therefore, by applying dedicated algorithms 
(\cite{CanteautChabaud98} or \cite[Volume I, Chapter 7]{Handb98}) searching for codewords of small
weight, it is possible to recover matrices $H_1,\dots{},H_{n_0-1}$. For instance,
the work factor of  Stern's algorithm  for searching codewords of weight 
$(n_0-1)d_v = 3*13=39$ in a code of dimension $p = 4032$ and
length $p(n_0-1)= 12096$ is about $2^{37}$ operations with $(g,\ell) = (3,43)$.

Finally, we are able to compute  $(H_i^T)^{-1}\times A_i = (H_{n_0}^{-1})^T \times Q_{n_0}^{-1}$
for any $1\le i \le n_0-1$. Inverting this matrix and applying again the second strategy
presented in Section~\ref{sec:stg2}, it is possible to find the matrices 
$H_{n_0}$ and $Q_{n_0}$.
  
 \subsection{Example}
We illustrate the  previously described attacks with some randomly generated
 polynomials $\boldsymbol{s}_{i,j}(x)$ and $\boldsymbol{q}_{i,j}(x)$ of weight $m=7$ and degree
 less than $p= 4032$ as given in \cite{BaldiChiara07}. 
We only put the exponents of the monomials
 that intervene in the expression of the polynomials. Recall that some coefficients
 $\boldsymbol{s}_{i,j}(x)$ has to be of even weight (actually of weight $m-1=6$) in order
 to generate an invertible matrix $S$.
 We implemented the attack in MAGMA software \cite{MAGMA}. 
 The running time on a Pentium 4 (2.80GHz) with 500 Mbytes RAM for 
 the  second strategy is $140$ seconds. The last step that consists in recovering the secret LDPC code 
is performed by applying Canteaut-Chabaud algorithm. The work factor of this operation
is  about $2^{36}$ operations. Our
implementation in MAGMA software finds a codeword of weight $(n_0-1)d_v =39$ in about $15$ minutes.
{\small \begin{eqnarray*}
 H_{1} &=&  [ 213, 457, 1467, 1702, 1786, 2015, 2155, 2197, 2569, 2744, 2823, 2902, 3710]\\
 H_{2} &=& [ 6, 626, 868, 1102, 1564, 1894, 2401, 2595, 2982, 3570, 3605, 3771, 3835]\\
 H_{3} &=& [ 615, 639, 1198, 1513, 1712, 1850, 1941, 2397, 2553, 3074, 3373, 3798, 3960]\\
H_{4}  &=& [ 135, 149, 241, 735, 1265, 2075, 2869, 3111, 3218, 3625, 3760, 3785, 
3969 ]
 \end{eqnarray*}
 \begin{eqnarray*}
 S_{1,1} &=& [ 24, 274, 334, 2025, 2574, 2661, 3601]\\
 S_{1,2} &=& [ 512, 1177, 2524, 2526, 2904, 2968, 3340]\\
 S_{1,3} &=& [ 930, 1175, 1210, 1459, 2200, 2303, 2811]\\
 S_{2,1} &=& [ 503, 1258, 1632, 1658, 2055, 2221, 2764]\\
 S_{2,2} &=& [ 989, 1256, 2568, 2625, 2906, 3139]\\
 S_{2,3} &=& [ 561, 616, 2499, 2787, 2835, 3061, 3865]\\
 S_{3,1} &=& [ 177, 465, 1659, 1958, 2795, 3605]\\
 S_{3,2} &=& [ 419, 461, 1540, 2262, 2435, 3474, 3587]\\
 S_{3,3} &=& [ 554, 1119, 1307, 2018, 2193, 2631, 3755]
 \end{eqnarray*}
 \begin{eqnarray*}
 Q_{1} &=& [ 456, 578, 1551, 1562, 1992, 2919, 3476]\\
 Q_{2} &=& [ 250, 268, 897, 1782, 2127, 3163, 3378]\\
 Q_{3} &=& [ 14, 1132, 1672, 1716, 2164, 2723, 3409]\\
 Q_{4} &=& [ 443, 593, 2401, 2615, 2981, 3612, 3993]
 \end{eqnarray*}
}

\section{Conclusion} \label{sec:conclusion}

The idea to introduce quasi-cyclic codes and quasi-cyclic 
low density parity-check codes is motivated by practical concerns to 
reduce key sizes of McEliece cryptosystem. The first variant of \cite{Gab05}
uses quasi-cyclic codes obtained from subcodes of a cyclic BCH code. The other variant of \cite{BaldiChiara07}
uses quasi-cyclic low density parity-check codes.
However, we have shown here 
that the cost of these two attempts at reducing key size is made at the expense of the security. Indeed,
we have presented different structural cryptanalysis of these two variants of McEliece cryptosystem. 
The first attack is applied to the variant of \cite{Gab05}
and extracts the secret permutation supposed to hide the structure 
of the secret codes. We show that the secret key recovery amounts to 
solve an over-constrained linear system. 
The  second attack accomplishes a total break of \cite{BaldiChiara07}. 
In the first phase, we
look for divisors of low weight of a given public polynomial. 
The last phase recovers the secret parity check matrix of 
the secret quasi-cyclic low density parity-check code by
looking for low weight codewords in a punctured version of the secret code.
An implementation shows that the first phase can be accomplished 
in about 140 seconds and the second phase in about 15 minutes.

However these results cannot be applied to the original 
McEliece's scheme using Goppa codes which 
represents up to now the only unbroken scheme.  An open problem 
which would be desirable to solve  is to come up with a way of reducing significantly the key
sizes in this type of public-key cryptosystem by maintaining the security intact.

\bibliographystyle{plain}

\end{document}